\newcommand{\comment}[1]{} 
\newtheorem{theorem}{Theorem}
\newtheorem{definition}[theorem]{Definition}
\newtheorem{remark}[theorem]{Remark}
\newtheorem{lemma}[theorem]{Lemma}
\def\R{{\mathbb R}}
\begin{document}
\date{}     

\title{Computing Melodic Templates in Oral Music Traditions}

\author{Sergey Bereg\thanks{Department of Computer Science, University of Texas at Dallas, USA. Email: besp@utdallas.edu.}
\and
Jos\'e-Miguel D\'iaz-B\'a\~nez\thanks{Department of Applied Mathematics II, University of Seville, Spain. Email: \{dbanez,nkroher,iventura\}@us.es. 
This work has received funding from Spanish Ministry of Economy and Competitiveness (MTM2016-76272-R AEI/FEDER,UE), the Junta de Andaluc\'ia (Cofla project, P12-TIC-1362) and the European Union's Horizon 2020 research and innovation programme under the Marie Sk\l{}odowska-Curie grant agreement No 734922.
} 
\and
Nadine Kroher$^\dag$
\and
Inmaculada Ventura$^\dag$
}



\maketitle

\begin{abstract}
The term \textit{melodic template or skeleton} refers to a basic melody which is subject to variation during a music performance. In many oral music tradition, these templates are implicitly passed throughout generations without ever being formalized in a score. In this work, we introduce a new geometric optimization problem, the spanning tube problem, to approximate a melodic template for a set of labeled performance transcriptions corresponding to an specific style in oral music traditions. 
Given a set of $n$ piecewise linear functions, we solve the problem of finding a continuous function, $f^*$, and a minimum value, $\varepsilon^*$, such that, the vertical segment of length $2\varepsilon^*$ centered at $(x,f^*(x))$ intersects at least $p$ functions ($p\leq n$).  
 The method explored here also provide a novel tool for quantitatively assess the amount of melodic variation which occurs across performances.
\end{abstract}

\section{Introduction}
There exists a long tradition of using mathematic concepts to understand, model and create music. Ever since ancient Greek mathematicians have first formulated musical acoustics from a mathematical standpoint \cite{crocker1963pythagorean}, music and mathematics have gone hand in hand (see i.e. \cite{vaughn2000music} for an overview). As a result, a number of challenging multi-disciplinary fields have opened up, including algorithmic composition \cite{nierhaus2009algorithmic}, ethnomathematics \cite{chemillier2002ethnomusicology} and automatic music analysis \cite{madden1999fractals}. 

A challenging issue in the mathematical research of music is to
understand the nature of music. A deep discussion about the use of mathematics and computer science in music can be found in \cite{volk2012}. Since computers allow to efficiently process musical data, mathematical research in music has also provided theoretical foundations for music technology, an interdisciplinary science of computational description, analysis and processing of music and music data. An important subfield of music technology is music information retrieval (MIR), the scientific discipline of retrieving high-level information from music recordings.
In particular, several optimization problems have been explored in the context of music analysis, including the modeling of rhythmic structures \cite{godfried-2010, aichholzer2015, barba2016}, harmony improvisation \cite{valian2014}, melodic similarity \cite{typke-03} and the detection of melodic patterns \cite{lubiw2004pattern}. 

This paper addresses a geometric problem inspired by musical characteristics found in \textit{flamenco} music. \textit{Flamenco} is a rich oral music tradition from southern Spain, which attracts a growing community of enthusiasts around the globe. 
The computational analysis of \textit{flamenco} melodies, which has mainly been carried out in the MIR community, has revealed algorithmic challenges due to its high degree of ornamentation and improvisation \cite{flamencoStyle2, pat, patternRepetition}. We propose a geometric method to approximate the melodic template inherent to a set of \textit{flamenco} performances. The term \textit{melodic template} refers to a basic melodic line, which undergoes variation, i.e. in the form of ornamentation and embellishment, during a music performance. An example of MIDI transcriptions of the different interpretations of the same underlying template is depicted in Figure \ref{fig:example}. Similar concepts are found in \textit{Arab-andalusian} \cite{kroher2017}, \textit{Carnatic} \cite{rao2014} and \textit{Shizou} \cite{sizhu1985} music. 

Mathematically speaking, we are interested in approximating a set of $n$ piecewise linear functions $f_i,i=1, \cdots, n$ representing melodies in a  temporal-pitch domain by a polygonal curve $f^*$ so that, in any time, $f^*$ is similar to an acceptable amount of such melodies. 

Related problems are polygonal approximation and fitting problems. The aim of polygonal approximation is to approximate a complex polygonal curve by a simpler polygonal curve so that total approximation error is minimized for a given error metric. This is an important task in computer vision, computer graphics, digital cartography, statistics, and data compression. Polygonal approximation is an old problem and it has been well-studied from various perspectives as
geographic information systems \cite{douglas1973}; digital
image analysis \cite{imai1986,hobby1993}; and computational geometry \cite{chan1996}. On the other hand, the fitting problems addresses the approximation of a discrete set of points by a polygonal curve or step function with a bounded number of line segments minimizing a fitting measure \cite{diaz2001,fournier2011}. 

Unlike polygonal approximation and fitting problems, the input of our problem is a set of polygonal curves instead of a single polygonal curve or a set of points, and the constraint of the output is given by the number of neighboring curves at any time in the time-pitch domain. To the best of our knowledge, this problem has not been considered in the literature and it is motivated by some potential applications in the automatic indexing and analysis of oral music traditions. In the sequel, we outline several such applications.

\begin{figure}[ht]
\centering
\includegraphics[width=1\columnwidth]{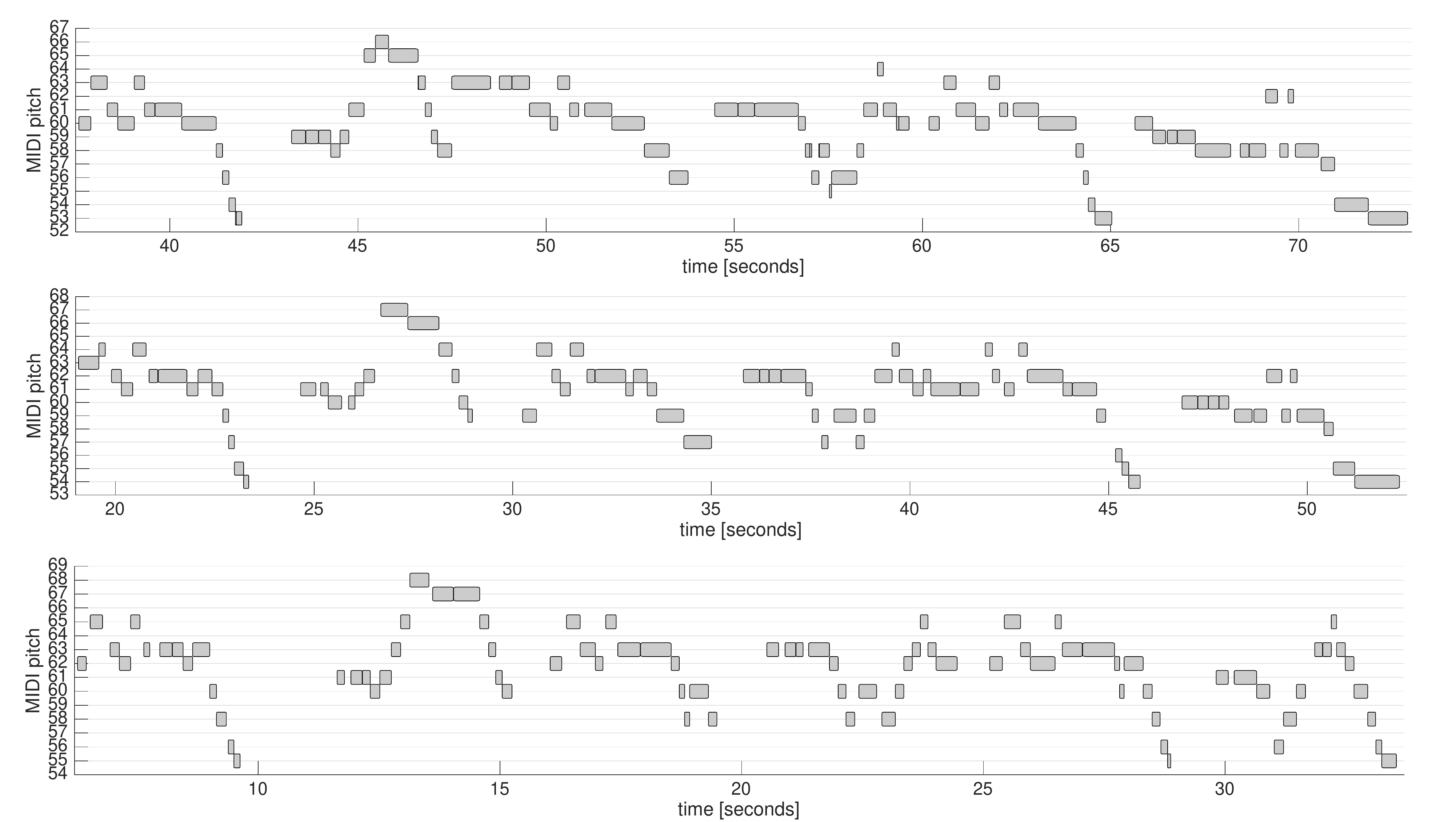}
\caption{MIDI transcriptions of three performances of the \textit{fandango de Valverde} melody. 
}
\label{fig:example}
\end{figure}

Comparing a performance to its underlying template, allows researchers to explore the creative process itself, model expressiveness and study the evolution of improvisation and interpretation \cite{widmer2004computational}. While these aspects have been extensively studied in popular, jazz and classical music, existing approaches rely on the existence of the template as a musical score. However, in oral music traditions, scores are usually not available and rare manual transcriptions refer to a particular performance instead of the underlying melodic line. Being able to automatically approximate a template based on a set of performance transcriptions is a fundamental requirement for large-scale comparative performance analysis for oral music traditions

Another potential application for automatically generated melodic templates is melody classification, where the task is to assign a label to an unknown melody according to its underlying melodic template. In the context of flamenco music, due to the unavailability of the template, this problem has been previously addressed in a supervised $k$-nearest neighbor classification scenario: An unknown melody is tagged based on the labels of the $k$ most similar melodies in a manually annotated dataset. However, the necessary pair-wise comparisons with a number of melodies at runtime represent a major limitation for scalability. Comparing to automatically extracted templates instead, can decrease the runtime complexity from $k$ to a single comparison.

The structure of the paper is as follows. 
We discuss the melody modeling and data preprocessing in Section \ref{sec:preprocessing}. 
In Section \ref{sec:STP} we introduce a new problem called the {\em Spanning Tube Problem} and develop a method for solving it.
In Section \ref{sect:case} we apply our method for a comparative performance analysis of the four \textit{fandango} styles investigated in this study.
Finally, we summarize our research and future directions in Section \ref{sect:conclusion}.

\section{Melody modeling: dataset and preprocessing}\label{sec:preprocessing}

In the scope of this study, we analyze a corpus of 40 flamenco commercial recordings belonging to 4 variants of the \textit{fandango de Huelva} style. All interpretations of a variant are characterized by a common melodic skeleton which is subject to ornamentation and variation during performance. An automatic transcription tool \cite{CANTE} was used to extract a note representation of the singing voice melody and transcription errors were subsequently corrected by flamenco experts. 

As a first preprocessing step, we convert each transcription, which is a discrete set of notes, described by their onset time, duration and MIDI pitch value, to a point set representation $A = \{a_1, a_2, .., a_m\}$ where a note $a_i = \{(x_i,y_i)\}$ is characterized by its onset time relative to the last note onset, $x_i \in [0,1]$, and its MIDI pitch value, $y_i$. Given that the transcription tool does not provide any rhythmic or metric quantization (which in flamenco, even when done manually, is a non-trivial task), the relative representation of the onset time is chosen to compensate for strong tempo variation among performances.  

Furthermore, variants of the same melody may be performed in different keys. Singers usually select a key according to their individual pitch range. In order to meaningfully compare the pitch values of a set of performances, we therefore need to perform key normalization. Here, we apply a commonly used method for this task taken from \cite{van2010computational}: One transcription is randomly selected as a reference and histogram of pitch occurrences $h_{Ref}$ is computed. For each of the remaining transcriptions, we compute the pitch histogram in the same way and compute the correlation coefficient with $h_{Ref}$ for different pitch shifts. Finally, for each recording, the pitch shift is applied, which yields the highest correlation coefficient. 

In addition to key transposition, variants of the same melody may exhibit strong rhythmic distortions. We therefore perform a rhythmic normalization as follows. Given $n$ transcriptions of a variant, we select one transcription $A_{ref}$ as a temporal reference to which align all remaining transcriptions. This is realized with the \textit{Needleman-Wunsch} algorithm \cite{needleman1970general} which finds an optimal matching with gaps among two sequences. We use linear interpolation to assign an onset time for notes which have not been matched in the procedure. Figure \ref{fig:align} shows an example of five transcriptions before and after the temporal alignment.

\begin{figure}[h]
\centering
\includegraphics[width=1\columnwidth]{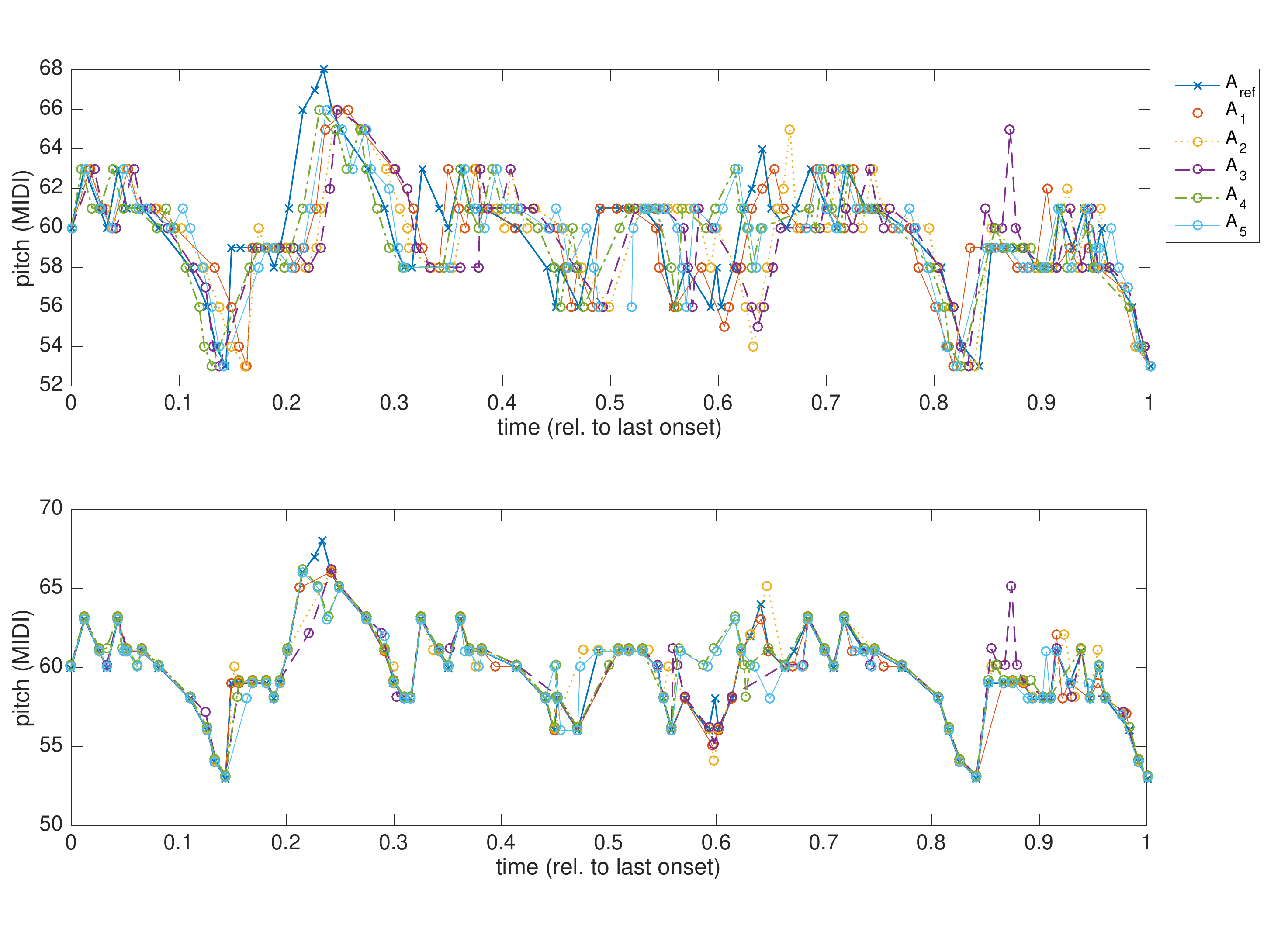}
\caption{Five performance transcriptions before (top) and after (bottom) temporal alignment. 
}
\label{fig:align}
\end{figure}

\section{The geometrical problem} \label{sec:STP}
In order to estimate a melodic template, we aim to find a continuous function, that is, monotone with respect to time, which represents a set of $n$ temporally aligned performance transcriptions. We consider that the data under this study may contain local outliers so some notes could be far from the template. Furthermore, our goal is to quantitatively assess the amount of melodic variation which occurs across performances. 
Geometrically, we view each of $n$ temporally aligned performance transcriptions as a polygonal curve in the plane with the time axis and the pitch axis. Since they are monotone with time, we call them {\em $x$-monotone} curves. 
Typically, the performance transcriptions are discrete and we make an assumption that the curves are polygonal.
The objectives and considerations give rise to a geometrical problem, which we call \emph{Spanning Tube Problem}. 
In this Section we describe an approach for solving the spanning tube problem.

\begin{definition}
Given $a,b\in\R$ and a continuous function $f(x)$ with domain $[a,b]$, we define the $\varepsilon$-tube of $f$, $T(f,\varepsilon),$ as the locus of points $(x,y)$ s.t. $(x,y) \in [a,b]\times[f(x)-\varepsilon, f(x)+\varepsilon]$. An example is illustrated in Figure \ref{fig:tube}.
\end{definition}

\begin{figure}[t]
\centering
\includegraphics[width=1\columnwidth]{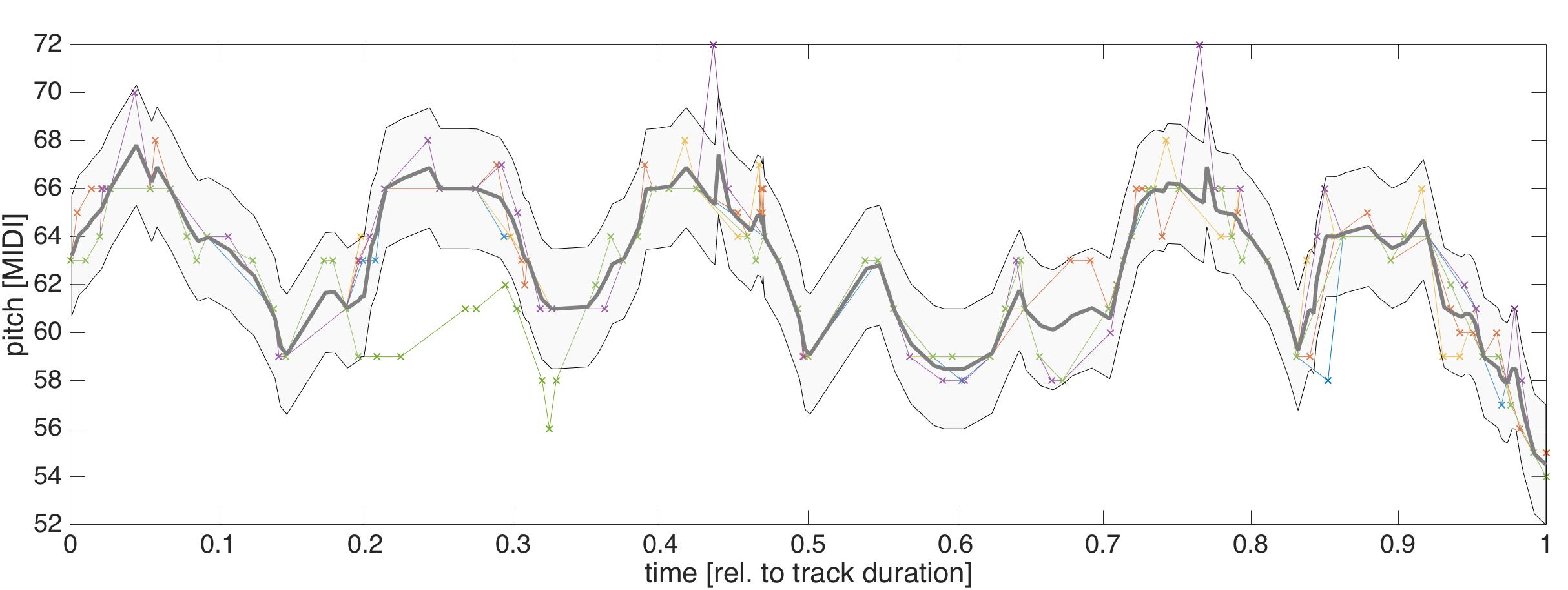}
\caption{The $\varepsilon-$tube for \it{
Fandango de Cala$\tilde{n}$a} (n=5, p=4, $\varepsilon=2.5$).}
\label{fig:tube}
\end{figure}

{\bf The Spanning Tube Problem (STP):}  \emph{Let $a, b \in  \mathbb{R} ,$ with $a<b$;  let $n,m,p \in \mathbb{Z}^+;$ and   for $i=1, \cdots, n,$ let $f_i: \left[a,b\right] \rightarrow \mathbb{R}$ be a piecewise linear function with at most $m$ links. Given $p\leq n,$ find minimum $\varepsilon^*>0$ such that there exists a continuous function $f^*(x)$ fulfilling that, for each $x \in \left[a,b\right]$ the vertical segment of length $2\varepsilon^*$ centered at $(x, f^*(x))$
intersects at least $p$ functions.}

\begin{remark}
Note that the $p$ intersected functions are not necessarily the same in any time and the template $f^*$ has to be continuous and $t$-monotone.
\end{remark}
 




\begin{definition}
Given a collection of $n$ continuous functions $f_i,$ let $f_{\mathrm{up}}$ be the upper and $f_{\mathrm{low}}$ be the lower envelope enclosing the $n$ functions. Let $f_{median}$ be the point-wise median of the $f_i$'s.
\end{definition}

\begin{remark} 
For $n=p$, $\varepsilon^*=\max\frac{f_{up}(x)-f_{low}(x)}{2}$ and $f^*(x) = \frac{f_{up}(x)+f_{low}(x)}{2}$ for $x \in[a,b]$ .
\end{remark}

\subsection{Solving the general problem}
The following result can be obtained by using a left-to-right linear sweeping:
\begin{theorem}
The decision problem for STP can be solved in $O(n^2m\log n)$ time.
\end{theorem}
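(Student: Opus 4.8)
The plan is to decide, for a fixed candidate value $\varepsilon$, whether a feasible template $f^*$ exists, by a left‑to‑right plane sweep that maintains the \emph{admissible set} together with its reachability from the left boundary. First I would reformulate the pointwise constraint. At a fixed $x$, the vertical segment of length $2\varepsilon$ centred at $(x,f^*(x))$ meets $f_i$ precisely when $|f^*(x)-f_i(x)|\le\varepsilon$, i.e. when $f^*(x)$ lies in the interval $I_i(x)=[f_i(x)-\varepsilon,\;f_i(x)+\varepsilon]$. Hence the admissible values for $f^*(x)$ form
\[
 S_x=\{y : y\in I_i(x)\text{ for at least }p\text{ indices }i\},
\]
the locus of depth $\ge p$ in the arrangement of the $2n$ $x$‑monotone piecewise‑linear curves $g_i^{-}(x)=f_i(x)-\varepsilon$ and $g_i^{+}(x)=f_i(x)+\varepsilon$. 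Each $S_x$ is a finite union of intervals whose endpoints lie on these curves, and the decision problem becomes: writing $R=\{(x,y):y\in S_x\}$, does there exist an $x$‑monotone continuous curve in $R$ spanning all of $[a,b]$, equivalently a continuous $f^*:[a,b]\to\R$ with $f^*(x)\in S_x$ for every $x$?

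Second, the sweep itself. At $x=a$ I would sort the $y$‑order of the $2n$ curves and compute the depth of every gap between consecutive curves (depth rises by $1$ at each lower endpoint $g_i^-$ and falls by $1$ at each upper endpoint $g_i^+$), thereby reading off the intervals of $S_a$, each of which is marked \emph{reachable}. The combinatorial structure of $S_x$ can change only when the $y$‑order of two curves changes, i.e. at a crossing. Two $x$‑monotone piecewise‑linear curves with $m$ links cross $O(m)$ times, so over all $O(n^2)$ pairs there are $O(n^2m)$ crossing events, together with the $O(nm)$ slope breakpoints. I would precompute all crossing abscissas and sort them in $O(n^2m\log n)$ time; processing them in increasing $x$, each crossing is a transposition of two currently adjacent curves, performed in $O(1)$ on an indexed order array, with ties broken consistently to handle degeneracies.

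Third — and this is where the real content lies — propagating reachability. A transposition of two curves of the \emph{same} type (two lower, or two upper endpoints) moves the boundary of $S_x$ continuously but does not alter its interval structure, so it needs no update. A transposition of a lower and an upper endpoint changes the depth of the single gap between them by exactly $\pm2$, and only such events can create, destroy, split, or merge intervals of $S_x$. I would keep the feasible intervals in a balanced search tree ordered by height, each node carrying a reachable/unreachable bit, and update it according to the topology change: when a gap newly reaches depth $p$ it either enlarges or merges adjacent feasible intervals — attaching continuously to the neighbour(s) that were feasible an instant earlier and becoming reachable iff any such neighbour is — or, if no neighbour is feasible, forms a brand‑new interval that must be marked \emph{unreachable}, since no continuous feasible function could have occupied that location just before; when a gap drops below depth $p$ it either shrinks an interval (flag unchanged) or splits it into two, both children inheriting the parent's flag. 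Each update costs $O(\log n)$, so the whole sweep runs in $O(n^2m\log n)$, and the instance is a YES instance iff some interval of $S_b$ is still reachable when the sweep reaches $x=b$.

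The main obstacle is proving the reachability invariant, namely that an interval carries the flag \emph{reachable} exactly when there is a continuous feasible function on $[a,x]$ ending inside it. The delicate points are continuity of $f^*$ through the pinch instants, where two intervals touch at a single point that a continuous curve is still allowed to traverse, and the justification that a freshly created interval must start unreachable while a split must pass the flag to both children. The geometric bookkeeping — bounding the number of crossings and maintaining the sorted order — is routine; the correctness of these topological update rules for the reachability flags is the crux of the argument.
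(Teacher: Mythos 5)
Your proposal is correct and follows essentially the same approach as the paper: a left-to-right sweep over the arrangement of the $2n$ shifted curves $f_i(x)\pm\varepsilon$, maintaining for each gap between consecutive curves its depth (the paper's $Count$) and a reachability flag (the paper's $Ind$), with $O(n^2m)$ vertex and crossing events each handled in $O(\log n)$ time. If anything, your explicit update rules for creating, merging, and splitting feasible intervals make the reachability bookkeeping more precise than the paper's sketch.
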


\begin{proof} 
We are given  $n$ functions $f_1(x),f_2(x),\dots,f_n(x),$ $x \in [a, b]$ and two parameters $p\in\{1,2,\dots,n\}$ and $\varepsilon>0$. 
Our task is to decide whether $\varepsilon<\varepsilon^*, \varepsilon=\varepsilon^*$ or $\varepsilon>\varepsilon^*$. This can be accomplished by sweeping the slab between lines with equation $x=a$ and $x=b$ by a vertical line as follows:  we maintain two a sorted lists. Let $L_f$ be the list of functions $f_i(x)\pm\varepsilon$ intersecting the sweep line and let $L_e$ be the list of events where the sweep line stops.
There are two types of events in $L_e$, see Figure~\ref{fig:events}:

{\em Vertex event.} The sweeping line at a vertex event has an equation $x=c$ where one of the functions $f_i(x)$ changes from one linear function to another. Each given function contributes one vertex event to the list of events.

{\em Crossing point event.}
At this event one of the functions $f_i(x)-\varepsilon,f_i(x)+\varepsilon$ intersects one of the functions $f_j(x)-\varepsilon,f_j(x)+\varepsilon$, for some $i<j$. 

\begin{figure}
\centering
\includegraphics[height=6cm]{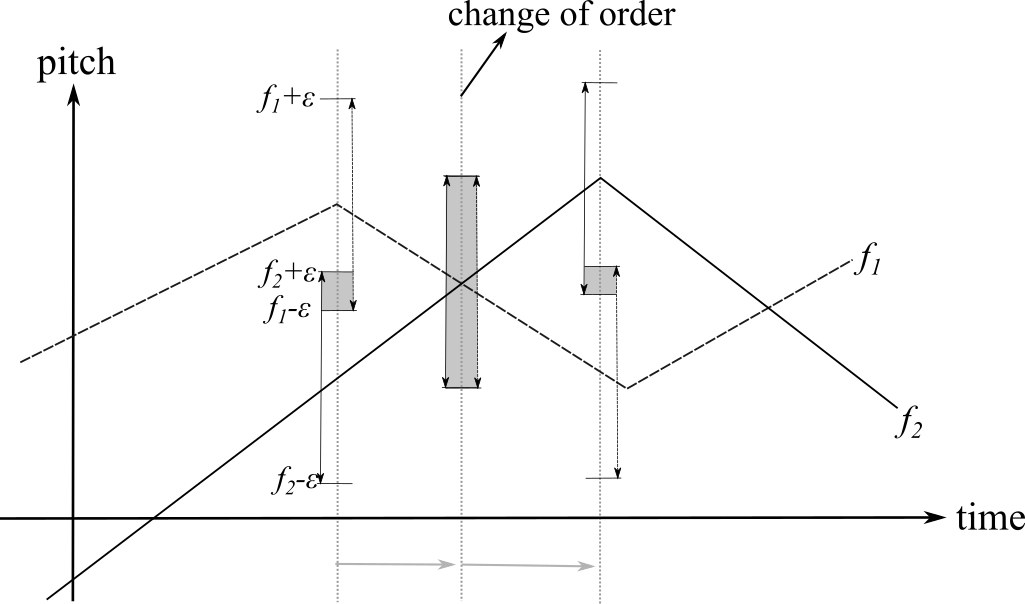}
\caption{The events: Vertex and crossing point events.}
\label{fig:events}
\end{figure}

On the other hand, using $L_f$ we store 
a number $Count(int)$ for each interval $int$ between two consecutive functions crossing the sweep line that indicates the number of  intervals of form 
$[f_i(x)-\varepsilon,f_i(x)+\varepsilon]$ intersecting $int$.
If $Count(int)\ge p$ we also store a boolean $Ind(int)$ indicating whether there is a continuous function $f(x)$ such that $\forall x \in[a,x_0]$ there are at least $p$ functions in $T(f,\varepsilon)$ where $x_0$ is determined by the current position of the sweep line. 

We also maintain the ranges of form 
$[f_i(x)+s_i\varepsilon,f_j(x)+s_j\varepsilon]$, where $s_i,s_j\in \{-1,1\}$, containing values of $y$ such that, for current $x=c$ and any $y$ in the range, there is 
a continuous function $f(x)$ with domain $[a,c]$ and $f(c)=y$.

Note that the continuity of the required function is guaranteed by computing the values of $Count(int)$ and $Ind(int)$.

Observe that $L_e$ can be updated in $O(\log n)$ time and the order of functions $f_i(x)\pm\varepsilon$ is fixed between two events. For any fixed $i$ and $j$, there are at most $O(m)$ crossing points events. 
The total number of events is $O(n^2m)$ and the list of events has size $O(n)$. Since each event can be processed in $O(\log n)$ time, the running time is $O(n^2m\log n)$.




\end{proof}

Based on the above result, we can apply bisection for computing an approximate solution. However,
if we dispose of a discrete set of candidate  values for $\varepsilon^*$, an exact solution can be found. The following lemma gives us the discrete set of candidate values for the optimization problem (STP).

\begin{definition}
We define a vertex of a piecewise linear function as a vertex-point and an intersection between two functions as an intersection-point.
\end{definition}

\begin{figure}[htb]
\centering
\includegraphics[width=1\columnwidth]{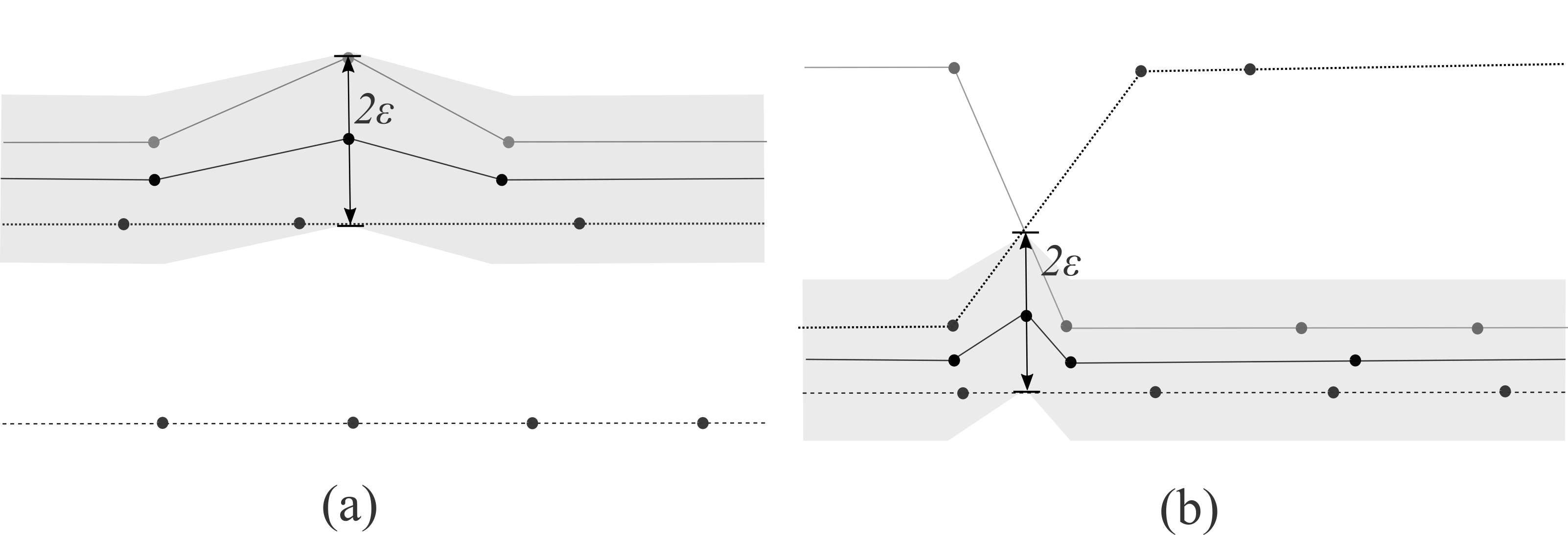}
\caption{Two scenarios for determining the candidate values: (a) $2\varepsilon$ corresponds to the distance between a vertex of a function $f_i$ and the value of another function $f_j$ at the same time instance; (b) $2\varepsilon$ corresponds to the distance between the intersection of two functions $f_i$ and $f_k$ and the value of other $f_j$ at the same time.}\label{fig:scheme}
\end{figure}

\begin{lemma}\label{lma1}
For $n > 2$ and $1<p<n$, let $\varepsilon^*$ be the optimal value for the optimization problem. Then, $2\varepsilon^*$ is equal to one of the values given by
the vertical distance between a vertex- (Figure \ref{fig:scheme} (a)) or an intersection-point (Figure \ref{fig:scheme} (b)) and its $(p-1)$ or $p$-nearest function.
\end{lemma}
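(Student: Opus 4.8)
The plan is to show that the optimal half-width $\varepsilon^*$ is always a breakpoint of a suitable structure that is piecewise constant in $\varepsilon$, and that each such breakpoint realizes one of the listed distances. First I would reformulate feasibility in the \emph{center space}. For a fixed $\varepsilon$, call a pair $(x,c)$ \emph{admissible} if the interval $[c-\varepsilon,c+\varepsilon]$ contains at least $p$ of the values $f_1(x),\dots,f_n(x)$; equivalently, $c$ lies in at least $p$ of the vertical intervals $[f_i(x)-\varepsilon,\,f_i(x)+\varepsilon]$. The STP is feasible for this $\varepsilon$ precisely when the admissible region $F_\varepsilon\subseteq[a,b]\times\R$ contains the graph of a continuous function defined on all of $[a,b]$; this is exactly the existence of a spanning connected corridor of $F_\varepsilon$, which is what the $Count$ and $Ind$ values in the sweep of the preceding theorem detect.

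Second, I would record monotonicity: $F_\varepsilon\subseteq F_{\varepsilon'}$ whenever $\varepsilon\le\varepsilon'$, so feasibility is a monotone predicate in $\varepsilon$ and there is a single threshold $\varepsilon^*$, with infeasibility below it and feasibility at and above it (the closed side following by a standard compactness/limiting argument on the admissible continuous selections). The heart of the argument is then to understand how $F_\varepsilon$, and in particular its connectivity, changes as $\varepsilon$ grows. Its boundary is built from arcs of the translated curves $f_i(x)+\varepsilon$ and $f_i(x)-\varepsilon$, and the combinatorial type of the arrangement of these arcs — together with the depth-$\ge p$ region they bound and the component structure of that region — stays constant except at finitely many critical values of $\varepsilon$. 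A critical value is one at which an arc passes through a vertex of some $f_i$, or at which two arcs $f_i(x)\pm\varepsilon$ and $f_j(x)\pm\varepsilon$ meet at a common point. The first happens exactly when $2\varepsilon$ equals the vertical gap between a vertex of $f_i$ and the value of another curve at that abscissa (Figure \ref{fig:scheme}(a)); since $f_i(x)+\varepsilon=f_j(x)+\varepsilon$ forces $f_i(x)=f_j(x)$, the second forces the abscissa to be an intersection-point of two curves, so that $2\varepsilon$ equals the gap between that intersection-point and the value of a third curve there (Figure \ref{fig:scheme}(b)).

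Third, I would argue that $\varepsilon^*$ must itself be critical. If $\varepsilon^*$ were not critical, the component structure of $F_\varepsilon$ would be constant on a neighborhood $(\varepsilon^*-\delta,\varepsilon^*+\delta)$; since the problem is feasible at $\varepsilon^*$, it would then remain feasible for some $\varepsilon<\varepsilon^*$, contradicting optimality. Hence $\varepsilon^*$ is a value at which a spanning corridor first appears — either because the pointwise bottleneck $\max_x \tfrac12\min_j\big(y_{(j+p-1)}(x)-y_{(j)}(x)\big)$, taken over windows of $p$ consecutive sorted values, is attained, or because two corridors merge to open a continuous passage. In either case the defining configuration is one of the two critical events above, and at the critical abscissa the tube $[c^*-\varepsilon^*,\,c^*+\varepsilon^*]$ captures exactly $p$ curves with its two boundaries resting on the extreme captured curves. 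Reading off $2\varepsilon^*$ as the vertical distance between those two boundary curves, and counting multiplicities at the critical point (one coincident curve in the vertex case, two in the intersection case), yields precisely the distance from the vertex- or intersection-point to its $(p-1)$- or $p$-nearest curve, as claimed.

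The main obstacle I anticipate is the connectivity step: proving rigorously that feasibility can change only at the listed critical $\varepsilon$ values, and in particular that corridor-merging events reduce to the same vertex/intersection configurations rather than introducing new candidate distances. This requires checking that between consecutive critical values the admissible region deforms by an ambient isotopy that preserves both the projection onto $[a,b]$ and the component containing any spanning corridor, so that no merge, split, or appearance of a spanning component can occur at a non-critical $\varepsilon$. The remaining bookkeeping — verifying that the width at the critical abscissa equals the stated nearest-curve distance, and that the $(p-1)$-versus-$p$ distinction is exactly accounted for by the coincidence multiplicity at the critical point — is routine once the critical configuration has been isolated.
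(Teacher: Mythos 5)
Your route is genuinely different from the paper's: you parametrize by $\varepsilon$ and argue that feasibility of the depth-$\ge p$ region $F_\varepsilon$ can only change at combinatorial critical values of the arrangement of the shifted curves $f_i(x)\pm\varepsilon$, whereas the paper fixes the optimal tube and runs a local perturbation argument at a point where the top and bottom boundaries are simultaneously tight (sliding the vertical segment $uv$ along $f_i$ and $f_j$ until a vertex or crossing is reached, as in Figure~\ref{fig:lemma8}). Your global argument, if completed, would indeed place $2\varepsilon^*$ among the vertex-to-function and intersection-to-function vertical distances, and the connectivity/isotopy step you flag as the main obstacle is exactly the analogue of the paper's perturbation cases (a)--(b); so the skeleton is viable.

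The genuine gap is in the last step, and it is not routine bookkeeping: the lemma does not merely claim that $2\varepsilon^*$ is the distance from a vertex- or intersection-point to \emph{some} other function, but to its \emph{$(p-1)$- or $p$-nearest} function, and your explanation of where that dichotomy comes from is wrong. You attribute it to coincidence multiplicity, pairing ``one coincident curve'' with the vertex case and ``two'' with the intersection case; but both the $(p-1)$ and the $p$ alternatives occur \emph{within} the vertex case (and likewise after reducing the intersection case to it). The correct distinction, which is the content of Cases 1 and 2 of the paper's proof (Figure~\ref{fig:lemma8}(c) versus (d)), is whether the function $f_i$ owning the critical vertex remains inside the tube in a neighborhood of $x_0$ --- so that it counts among the $p$ captured curves and the opposite boundary rests on the $(p-1)$-nearest function --- or leaves the tube at $x_0$, in which case $p$ \emph{other} functions must be captured and the opposite boundary rests on the $p$-nearest. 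Without this local analysis at the pinch point (exactly $p$ curves captured, both tube boundaries tight, plus the parallel-segment degeneracy the paper dispatches separately), your argument only certifies the larger candidate set of all vertex/intersection-to-function distances, which has size $O(n^3m)$ rather than $O(n^2m)$ and would weaken the running time claimed in Theorem~\ref{general}. To repair it, you must carry out precisely the case analysis at the critical configuration that constitutes the body of the paper's proof.
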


\begin{proof} 
Let $\varepsilon^*$ be the minimum value such that there exists a continuous function $f^*(x)$ such that $\forall x\in[a,b]$ there are at least $p$ functions in $T(f^*,\varepsilon^*)$.
Then, there are two functions $f_i$ and $f_j$ in $T(f^*,\varepsilon^*)$ and $x_0 \in [a,b]$ such that $f_i(x_0)=f^*(x_0) + \varepsilon^*$ and $f_j(x_0)=f^*(x_0) - \varepsilon^*$.
Otherwise, we can decrease the width  of the tube contradicting optimality. It is easy to prove that the intersection points between the functions $f_i$ and $f_j$ and the boundary of the tube must have the same abscissa. 
There are three cases: the point $u=(x_0, f_i(x_0))$ is a vertex-point, a crossing point, or none of the above. Analogously for $v=(x_0, f_j(x_0)).$

\begin{figure}[h]
\centering
\includegraphics[scale=0.9]{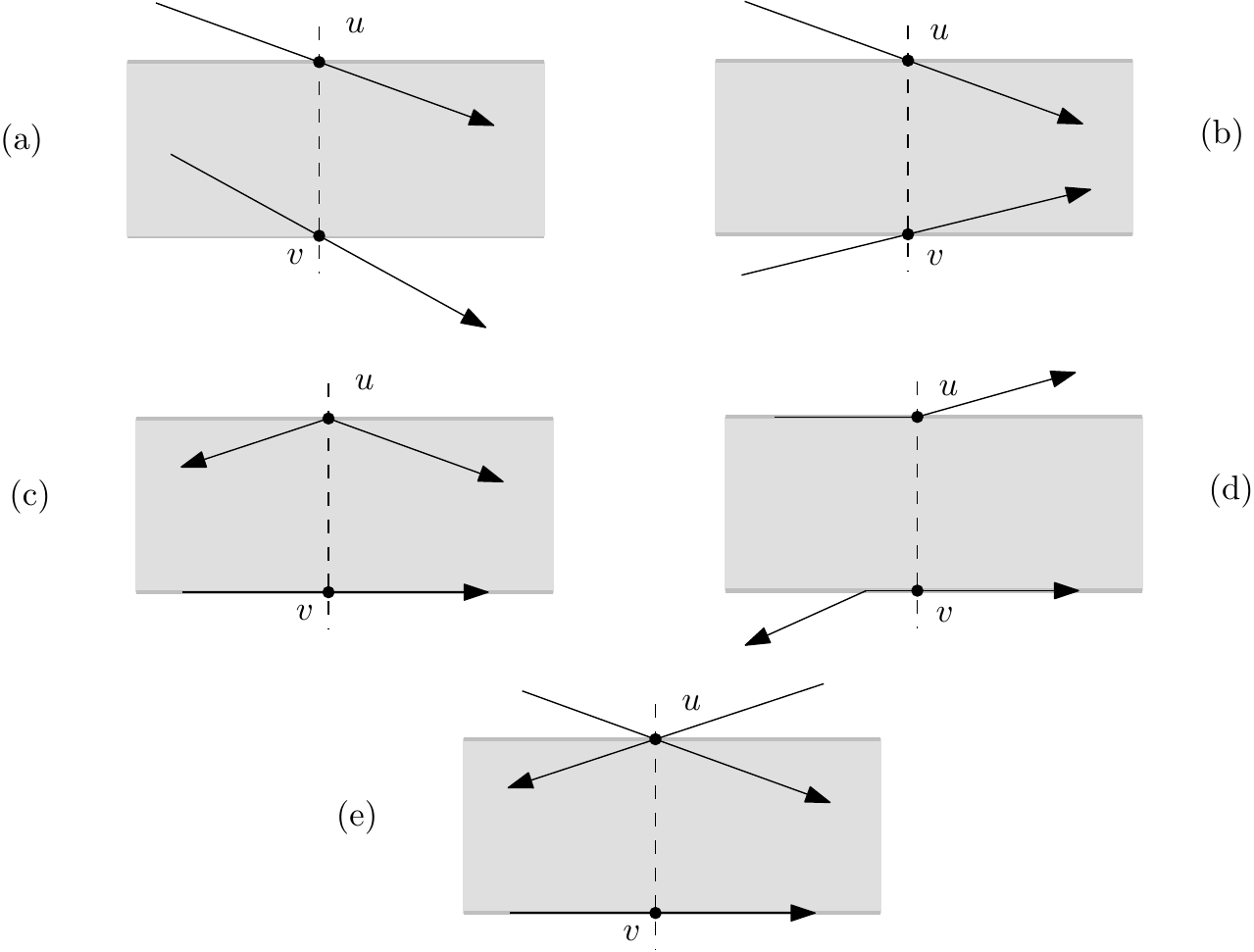}
\caption{(a) and (b): $u$ and $v$ are neither vertex nor crossing points. (c) and (d): $u$ is a vertex point. (e): $u$ is a crossing point.}
\label{fig:lemma8}
\end{figure}

Suppose that $u$ and $v$ are neither a vertex nor a crossing point. 
Then up to symmetry, we have two cases as in Figure~\ref{fig:lemma8} (a) and (b). 
In both cases, the width of the tube can be slightly decreased maintaining continuity. This fact contradicts the optimality condition.  
In case shown in Figure~\ref{fig:lemma8} (a) we can move the endpoints $u$ and $v$ on $f_i$ and $f_j$ respectively maintaining vertical the segment $uv$ in the direction where the length of the vertical segment decreases. Then, there exists an instant in which a vertex or crossing point is found, so this case reduces to the vertex or crossing point event. 
Note that even when $f_i$ and $f_j$ are parallel, theses events are found if we maintain at least $p$ functions inside the tube.
A similar reasoning can be done in the second case (in Figure~\ref{fig:lemma8} (b)): since the part of the slab to the left from the vertical segment $uv$ must have at least $p$ functions, then in the right part there are at least $p+2$ functions so there exists a narrower tube containing at least $p$ functions.

Consider now the case in which $u$ is a vertex-point of $f_i(x)$ (w.l.o.g. $u$ lies on the top boundary of the tube).  
Two cases arise, depending on the slopes of the neighboring segments, see (c) and (d) in Figure~\ref{fig:lemma8}. 

{\em Case 1}. 
The function  $f_i(x)$ is in the tube for $x$ in some neighborhood of $x_0$ as in in Figure~\ref{fig:lemma8} (c).
Then $f_j(x)$ is the $p-1$-nearest function of $u$.

{\em Case 2}. 
For $\delta >0$ and $x \in (x_0, x_0 + \delta)$ or $x \in ( x_0 - \delta, x_0)$,  $f_i(x)$ is not in the tube as in Figure~\ref{fig:lemma8}(d).  
The function $f_i(x)$ on the top boundary comes out the tube at time $x_0$ and the function $f_j(x)$ is the $p$-nearest function.

Finally, the case in which $u$ is a intersection-point can be easily reduced to the vertex-point type, see Figure~\ref{fig:lemma8}(e).
\end{proof}


\begin{theorem} \label{general}
For $n > 2$ and $p>1$,  the optimization problem can be solved in $O(n^2m\log n\log nm)$ time.
\end{theorem}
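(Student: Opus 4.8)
The plan is to convert the $O(n^2m\log n)$ decision procedure into an exact optimization algorithm by binary searching over the finite set of candidate values isolated in Lemma~\ref{lma1}. That lemma guarantees that $2\varepsilon^*$ equals the vertical distance from some vertex- or intersection-point to its $(p-1)$- or $p$-nearest function, so it suffices to enumerate this candidate set $C$, sort it, and locate the smallest element of $C$ for which the decision procedure reports feasibility; by the lemma that element is precisely $2\varepsilon^*$. The binary search is justified by monotonicity of feasibility in $\varepsilon$: a tube of half-width $\varepsilon$ containing at least $p$ functions at every $x$ remains valid for every larger half-width, so the three-way answer ($\varepsilon<\varepsilon^*$, $\varepsilon=\varepsilon^*$, or $\varepsilon>\varepsilon^*$) lets us discard half of the surviving candidates at each step.

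First I would bound $|C|$. Each $f_i$ has at most $m$ vertices, giving $O(nm)$ vertex-points. For each unordered pair $f_i,f_j$ the difference $f_i-f_j$ is piecewise linear with $O(m)$ breakpoints, hence has $O(m)$ sign changes, so the pair yields $O(m)$ intersection-points and the $\binom{n}{2}$ pairs yield $O(n^2m)$ of them in total. Since Lemma~\ref{lma1} attaches to each such point only its $(p-1)$- and $p$-nearest distance, two values per point, we obtain $|C|=O(n^2m)$, and sorting $C$ costs $O(n^2m\log(nm))$.

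I would then run the binary search on the sorted list. It has $O(n^2m)$ entries, so $O(\log(n^2m))=O(\log(nm))$ iterations suffice, each invoking the decision procedure once at cost $O(n^2m\log n)$. The search therefore costs $O(n^2m\log n\log(nm))$, which dominates the sorting term and yields the bound claimed in Theorem~\ref{general}.

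The main obstacle is enumerating $C$ within this budget, specifically computing, for each of the $O(n^2m)$ candidate points, its $(p-1)$- and $p$-nearest function without an $\Theta(n)$-per-point blowup (a direct selection over all $n$ functions at every point would cost a prohibitive $O(n^3m)$). My resolution is to generate the candidates during a single left-to-right sweep, as in the decision proof, while maintaining the functions in an order-statistic balanced search tree keyed by value; this order changes only at the $O(n^2m)$ events. At an event with query ordinate $y_0$, the functions lying above $y_0$ give vertical distances increasing with rank, and those below give a second increasing sequence, so the $(p-1)$-th and $p$-th nearest distances are order statistics of the merge of two sorted sequences, extractable in $O(\log^2 n)$ time via the classical two-sorted-sequence selection with $O(\log n)$-time rank access in the tree. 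Enumeration then costs $O(n^2m\log^2 n)$, subdominant to the binary-search term. Verifying that this selection step is correct at the degenerate events (parallel segments, coincident abscissae, the Case~1/Case~2 distinction of Lemma~\ref{lma1}) is the part I would scrutinize most carefully.
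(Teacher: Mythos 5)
Your proposal matches the paper's proof in all essentials: both enumerate the $O(n^2m)$ candidate values supplied by Lemma~\ref{lma1} via a plane sweep, sort them in $O(n^2m\log nm)$ time, and binary search with the $O(n^2m\log n)$ decision procedure to get the stated bound. The only divergence is in the candidate-generation substep, where the paper asserts an $O(1)$ lookup of the $(p-1)$- and $p$-nearest functions from the maintained sweep-line order while you give a more careful $O(\log^2 n)$ order-statistic selection; both land in a term subdominant to the binary search, so the result is unaffected.
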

\begin{proof}
The steps of the algorithm are:
(I) Compute the candidate values;
(II) Sort the candidates and 
(III) Perform binary search using the decision algorithm.
Step (I) can be done by sweeping the arrangement of the $n$ functions with a vertical line. In this status line we maintain the order of the intersection of the line with the functions (it can be vertex-points, intersection-points and functions) so that we can compute in $O(1)$ the corresponding $p-1$- or $p$-nearest functions (one above and other below) for every candidate. The overall complexity of the sweep is $O(n^2m\log n)$.
Step (I) can be done in $O(n^2m\log n)$ time, Step (II) requires $O(n^2m\log nm)$ time and Step (III) in $O(n^2m\log n\log nm)$ time.
\end{proof}

\subsection{An efficient solution for a particular case}

We have the following result for a particular case:

\begin{lemma}\label{32}
Set $n = 3, p = 2$ and let $\varepsilon^*$ be the optimal value for the optimization problem. Then the value of $2\varepsilon^*$ is one of these values:
\begin{itemize}
\item a local  maximum value of $f_{up}-f_{median}$ or
\item a local maximum value of $f_{median}-f_{low}$ or 
\item a local  minimum value of $f_{up}-f_{low}$.
\end{itemize}
\end{lemma}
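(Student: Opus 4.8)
The plan is to reduce the $n=3$, $p=2$ instance to a one–dimensional continuous–selection problem and then read off the critical value from the way feasibility degenerates as $\varepsilon$ shrinks. At every time $x$ the three values $f_{low}(x)\le f_{median}(x)\le f_{up}(x)$ are fixed, and a vertical segment of half–length $\varepsilon$ centred at $(x,f^*(x))$ contains at least two of them iff it contains the lower pair $\{f_{low},f_{median}\}$ or the upper pair $\{f_{median},f_{up}\}$ (capturing the outer pair forces the median as well, so it is subsumed). Writing $g_L=f_{median}-f_{low}$, $g_U=f_{up}-f_{median}$ and $g_T=f_{up}-f_{low}=g_L+g_U$, the admissible centres form $F_\varepsilon(x)=I_{\mathrm{low}}(x)\cup I_{\mathrm{up}}(x)$, where $I_{\mathrm{low}}(x)=[f_{median}(x)-\varepsilon,\,f_{low}(x)+\varepsilon]$ is nonempty iff $g_L(x)\le 2\varepsilon$ and $I_{\mathrm{up}}(x)=[f_{up}(x)-\varepsilon,\,f_{median}(x)+\varepsilon]$ is nonempty iff $g_U(x)\le 2\varepsilon$; the two intervals overlap exactly when $g_T(x)\le 2\varepsilon$. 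Thus a feasible template is precisely a continuous selection $f^*(x)\in F_\varepsilon(x)$, i.e.\ a continuous curve living in the planar region $R_\varepsilon=\{(x,y):y\in F_\varepsilon(x)\}$ whose $x$–projection is all of $[a,b]$.

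The region $R_\varepsilon$ is the union of a \emph{lower ribbon} (present where $g_L\le 2\varepsilon$) and an \emph{upper ribbon} (present where $g_U\le 2\varepsilon$), glued together along the set $\{g_T\le 2\varepsilon\}$. Hence a feasible $f^*$ exists iff $R_\varepsilon$ has a connected component that projects onto $[a,b]$, and $\varepsilon^*$ is the infimum of the $\varepsilon$ for which this holds. Since $g_L,g_U,g_T$ are piecewise linear, $R_\varepsilon$ grows monotonically with $\varepsilon$, the spanning component is present for $\varepsilon>\varepsilon^*$ and absent for $\varepsilon<\varepsilon^*$, and a compactness argument keeps a spanning selection alive at $\varepsilon=\varepsilon^*$, so the component becomes critical there. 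The heart of the proof is to classify how this component can disappear as $\varepsilon\downarrow\varepsilon^*$.

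I would argue there are exactly two failure mechanisms. \textbf{(A) A vertical slice empties.} $F_\varepsilon(x)$ is empty exactly when $\min(g_L(x),g_U(x))>2\varepsilon$, so the first slice to close as $\varepsilon$ decreases sits at a local maximum $x_0$ of $\min(g_L,g_U)$ with $2\varepsilon^*=\min(g_L(x_0),g_U(x_0))$. Near such an $x_0$ one gap is strictly the smaller; if it is $g_U$ (resp.\ $g_L$) then $\min(g_L,g_U)=g_U$ (resp.\ $g_L$) locally, so $x_0$ is a local maximum of $g_U$ (resp.\ of $g_L$), yielding the first two candidate types. \textbf{(B) A connectivity pinch.} If no slice empties but the path is forced from the lower ribbon to the upper ribbon (because the lower ribbon is unavailable further right and the upper ribbon unavailable further left), the switch can only occur where the ribbons overlap, i.e.\ on $\{g_T\le 2\varepsilon\}$; the channel permitting the switch closes when $2\varepsilon$ drops below a local minimum of $g_T$, giving $2\varepsilon^*$ equal to a local minimum of $f_{up}-f_{low}$, the third candidate type.

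The main obstacle is upgrading ``first to close'' into the precise statement that the critical value is a \emph{local} extremum rather than a mere endpoint value or a global bottleneck, and ruling out the coupling in case (B), where the admissible switch window itself moves with $\varepsilon$. I would control this using the finiteness from Lemma~\ref{lma1}: the gap functions $g_L,g_U,g_T$ have only finitely many breakpoints (the vertex– and intersection–points of the $f_i$), the candidate values for $2\varepsilon^*$ are attained there, and a local perturbation argument at the attaining point shows that if it were not a local maximum of $g_L$ or $g_U$ (case A), nor a local minimum of $g_T$ (case B), then $\varepsilon$ could be lowered while still retaining a spanning selection, contradicting optimality. Degenerate ties, such as $g_L(x_0)=g_U(x_0)$ or an extremum attained along a flat segment, arise as limiting cases and are absorbed into the same three families.
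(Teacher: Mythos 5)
Your proposal is correct and follows essentially the same route as the paper: your upper and lower ``ribbons'' are exactly the paper's regions $R_1(\varepsilon)=T(f_{up},\varepsilon)\cap T(f_{median},\varepsilon)$ and $R_2(\varepsilon)=T(f_{low},\varepsilon)\cap T(f_{median},\varepsilon)$, feasibility is characterized in both as the existence of a continuous ($x$-spanning) curve in their union, and your failure mechanisms (A) with its two subcases and (B) correspond one-to-one to the paper's three disconnection cases yielding local maxima of $f_{up}-f_{median}$, of $f_{median}-f_{low}$, and a local minimum of $f_{up}-f_{low}$. Your treatment is if anything slightly more explicit about the degenerate ties, which the paper glosses over.
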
  
\begin{proof}
For $\varepsilon>0$, let $R_1(\varepsilon)=T(f_{up},\varepsilon ) \cap T(f_{mean},\varepsilon)$ and let $R_2(\varepsilon)=T(f_{low},\varepsilon ) \cap T(f_{mean},\varepsilon)$. Observe that, a solution exits for $\varepsilon$ if and only if there exists a monotone function inside the region $R_1(\varepsilon)\cup R_2(\varepsilon)$.
Now, a  solution exists for $R_1(\varepsilon^*) \cup R_2(\varepsilon^*)$ but no within $R_1(\varepsilon)\cup R_2(\varepsilon)$ for any  $\varepsilon < \varepsilon^*$. 
Note that $R_i(\varepsilon)\subset R_i(\varepsilon^*)$ for $i = 1, 2$. We have 3 cases:
\begin{itemize}
\item Two points in $R_1(\varepsilon^*)$ are connected by a function  but are not connected in $R_1(\varepsilon)$. Then, for some $x^*\in (a,b)$, we have $f_{up}(x^*)-\varepsilon   = f_{mean}(x^*) + \varepsilon$ and $f_{up}(x)-\varepsilon < f_{mean}(x)+ \varepsilon$ for all $x\not= x^*$ in some neighborhood of $x^*$.
Therefore $f_{up}-  f_{mean}$ has a local maximum value at $x^*$ with value $2\varepsilon^*$.
\item Two points in $R_2(\varepsilon^*)$ are connected by a function  but are not connected in $R_2(\varepsilon)$. Then, for some $x^*\in (a,b)$, we have $f_{mean}(x^*)-\varepsilon   = f_{lower}(x^*) + \varepsilon$ and $f_{mean}(x)-\varepsilon < f_{low}(x)+ \varepsilon$ for all $x\not= x^*$ in some neighborhood of $x^*$.
Therefore $f_{mean}-  f_{low}$ has a local maximum value at  $x^*$ with value $2\varepsilon^*$.
\item Two points, one in $R_1(\varepsilon^*)$ and one in $R_2(\varepsilon^*)$, are connected in  $R_1(\varepsilon^*) \cup R_2(\varepsilon^*)$ but are not within  $R_1(\varepsilon)\cup R_2(\varepsilon)$.  Then, for some $x^*\in (a,b)$, we have $f_{up}(x^*)-\varepsilon   = f_{low}(x^*) + \varepsilon$ and $f_{up}(x)-\varepsilon > f_{low}(x)+ \varepsilon$ for all $x\not= x^*$ in some neighborhood of $x^*$. Therefore $f_{up}-  f_{low}$ has a local minimum value at  $x^*$ with value $2\varepsilon^*$.
\end{itemize}
\end{proof}

By Lemma \ref{32}, we have $O(m)$ candidates and performing binary search the optimal solution can be computed in $O(m\log m)$ time by using the decision algorithm. We will improve the running time by proving some properties first.

\begin{definition}
Let $x_0=a,x_1,\dots,x_t=b$ be the events defined by vertices and intersection points of the functions. Consider a slab between two events at $x_i$ and at $x_{i+1}$.
Suppose that a spanning tube $T$ in $[x_0, x_{i+1}]$ covers $f_{up}(x_i),f_{median}(x_i)$ and $f_{low}(x_{i+1}),f_{median}(x_{i+1})$.
We say that the tube make a {\em transition HL} in the $i$th slab. 
\end{definition}

\begin{lemma}[Two trapezoids]
Suppose a spanning tube $T$ make a transition HL in an $i$th slab.
Then there exists $x\in [x_i,x_{i+1}]$ such that $f_{up}(x),f_{median}(x)$ and $f_{low}(x)$ are covered by $T$.
Furthermore, the width of $T$ is at least 
\begin{equation} \label{2trap}
w=\max(\varepsilon_1^i,f_{up}(x)-f_{low}(x),w_2^{i+1})
\end{equation}
and there exist a tube of width $w$ using two trapezoids in the slab, see Figure \ref{fig:2trap}.
\end{lemma}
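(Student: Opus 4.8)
The plan is to work entirely inside the single slab $[x_i,x_{i+1}]$, on which $f_{up}$, $f_{median}$ and $f_{low}$ are each linear because no vertex or intersection event lies strictly inside it. I would argue in three stages: first locate a time at which the tube covers all three envelopes, then read off the three lower bounds on the width, and finally build an explicit width-$w$ tube out of two trapezoids. Throughout I use that $f_{low}\le f_{median}\le f_{up}$ pointwise and that a spanning tube covers at least $p=2$ of the three functions at every $x$.

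For the existence claim I would use a connectedness argument. For $x\in[x_i,x_{i+1}]$ write $R_1(x)=[f_{up}(x)-\varepsilon,\,f_{median}(x)+\varepsilon]$ for the set of center heights whose vertical segment covers both $f_{up}$ and $f_{median}$, and $R_2(x)=[f_{median}(x)-\varepsilon,\,f_{low}(x)+\varepsilon]$ for those covering $f_{median}$ and $f_{low}$. Since $T$ always covers at least two of the three functions, its center $f$ satisfies $f(x)\in R_1(x)\cup R_2(x)$ for every $x$ (the remaining possibility, covering $f_{up}$ and $f_{low}$, also covers $f_{median}$ and thus lands in $R_1(x)\cap R_2(x)$). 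Define the closed sets $A=\{x:f(x)\in R_1(x)\}$ and $B=\{x:f(x)\in R_2(x)\}$; they are closed because they are cut out by non-strict inequalities among continuous functions, they cover the connected interval $[x_i,x_{i+1}]$, and the transition-HL hypothesis gives $x_i\in A$ and $x_{i+1}\in B$. A connected set cannot be written as a union of two disjoint nonempty closed sets, so $A\cap B\neq\emptyset$; any $x$ in the intersection has $f(x)\in R_1(x)\cap R_2(x)=[f_{up}(x)-\varepsilon,\,f_{low}(x)+\varepsilon]$, which is nonempty and forces all of $f_{up}(x),f_{median}(x),f_{low}(x)$ into $T$. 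This proves the existence of the triple-covered time and simultaneously yields $f_{up}(x)-f_{low}(x)\le 2\varepsilon$.

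For the lower bound I would combine three constraints on the width of $T$. The triple-covered instant $x$ already forces the width to be at least $f_{up}(x)-f_{low}(x)$. Restricting $T$ to the prefix on which it realizes the high configuration is itself a feasible tube covering $\{f_{up},f_{median}\}$ up to $x_i$, so by the definition of $\varepsilon_1^i$ its width is at least $\varepsilon_1^i$; symmetrically the low portion of $T$ reaching $x_{i+1}$ has width at least $w_2^{i+1}$. Taking the maximum of the three quantities gives the stated bound $w$ of~\eqref{2trap}.

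Finally, for achievability I would choose the switch time $x^\ast\in[x_i,x_{i+1}]$ minimizing $f_{up}-f_{low}$; since both are linear on the slab the minimum is attained at an endpoint, and once the width equals $w$ the window $[f_{up}(x^\ast)-\varepsilon,\,f_{low}(x^\ast)+\varepsilon]$ is nonempty. I then let the center follow the upper trapezoid $R_1$ on $[x_i,x^\ast]$ and the lower trapezoid $R_2$ on $[x^\ast,x_{i+1}]$, joining the two pieces at a common height in $R_1(x^\ast)\cap R_2(x^\ast)$; because the trapezoid boundaries are line segments, this yields a single-valued continuous (hence $t$-monotone) center of width exactly $w$, covering the high pair on the left part and the low pair on the right part. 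I expect the achievability step to be the main obstacle: one must verify that the two linear-boundary trapezoids really overlap at $x^\ast$ for this width and that the concatenated center remains feasible across the entire slab—that no coverage constraint is silently violated between $x_i$ and $x^\ast$ or between $x^\ast$ and $x_{i+1}$—whereas the existence and lower-bound steps follow cleanly from the connectedness argument and the definitions of $\varepsilon_1^i$ and $w_2^{i+1}$.
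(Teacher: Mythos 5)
Your proposal is correct and follows essentially the same route as the paper: locate the triple-covered instant by a continuity argument, derive the three lower bounds from the transition hypothesis and the definitions of $\varepsilon_1^i$ and $w_2^{i+1}$, and achieve the bound with the explicit two-trapezoid tube. Your intermediate-value/connectedness argument for the existence of $x$ and your feasibility checks for the concatenated center are simply rigorous versions of what the paper compresses into ``can be seen using continuity of the tube,'' so no gap remains.
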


\begin{figure}[h]
\centering
\includegraphics[width=8cm]{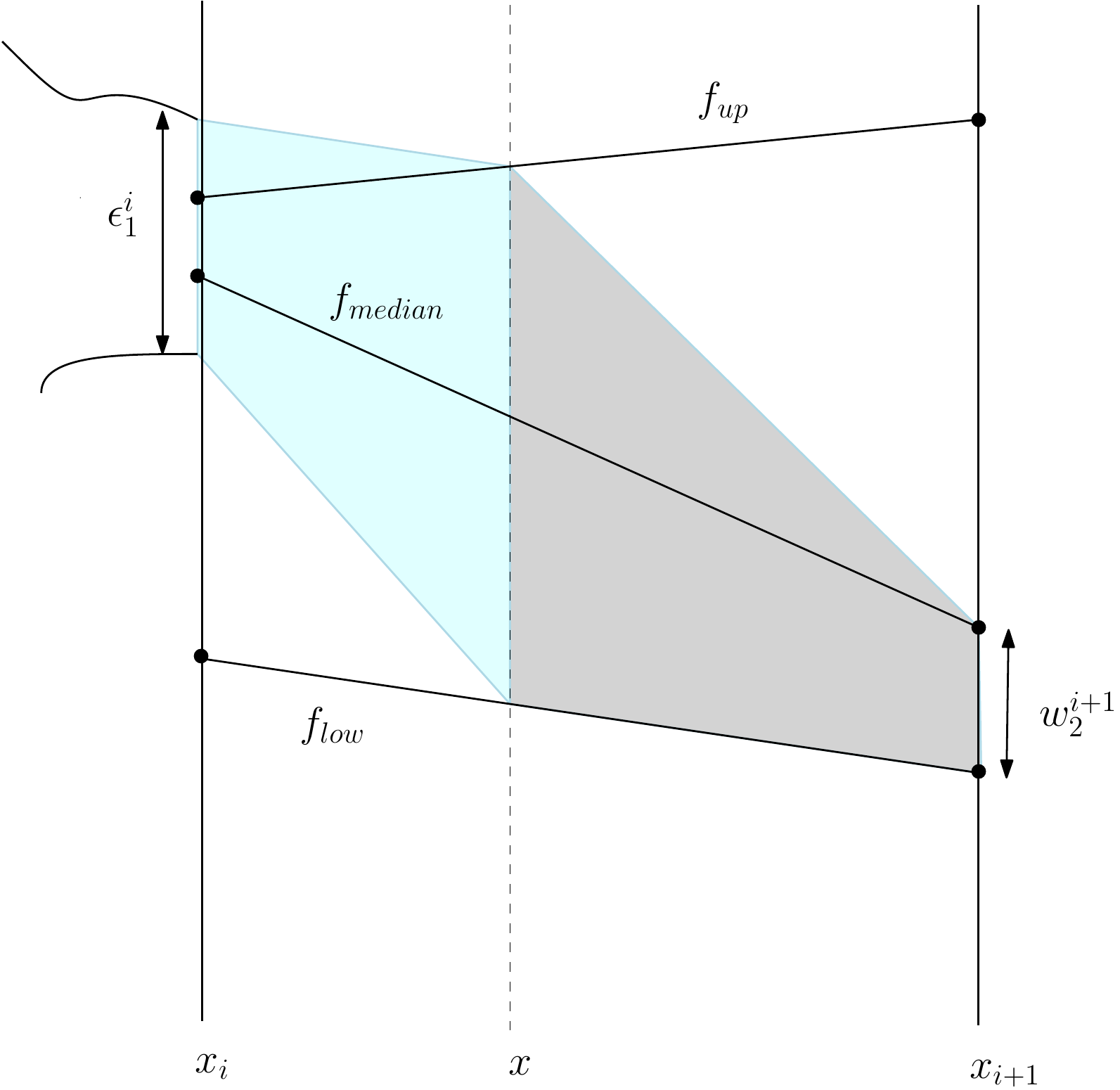}
\caption{Two trapezoids: The tube make a transition in the ith slab.}\label{fig:2trap}
\end{figure}

\begin{proof}
The existence of $x$ can be seen using continuity of the tube.
Since all 3 functions are linear in the slab, the first trapezoid covers $f_{up}$ and $f_{median}$.
Similarly the second trapezoid covers $f_{median}$ and $f_{low}$. 
The lemma follows since the width of the tube formed by two trapezoids is $w$.
\end{proof}

\begin{lemma}[Transition] \label{l:trans}
The smallest width of a spanning tube $T$ making a transition HL in an $i$th slab is equal to
$\max(\varepsilon_2^i,\min(w^i,w^{i+1}),w_1^{i+1})$.
The vertical for the transition can be decided by comparing $w^i$ and $w^{i+1}$.
Thus, there exists an optimal tube having \emph{all} transitions at the events only.
\end{lemma}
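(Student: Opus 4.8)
The plan is to derive this lemma as a one–variable optimization on top of the Two trapezoids lemma: inside the $i$th slab all of $f_{up}, f_{median}, f_{low}$ are affine (no event lies strictly between $x_i$ and $x_{i+1}$), so the only remaining freedom once we commit to an HL transition is the horizontal position $x$ of the transition pinch, i.e.\ the abscissa at which all three functions are simultaneously covered. First I would invoke the previous lemma in the form: for a \emph{fixed} pinch position $x\in[x_i,x_{i+1}]$, the narrowest HL-transition tube realizable in the slab has width $\max(A,\, f_{up}(x)-f_{low}(x),\, B)$, where $A$ is the interface width forced at the left event $x_i$ by the H-configuration and $B$ the interface width forced at the right event $x_{i+1}$ by the L-configuration, and this value is attained by the explicit two-trapezoid construction. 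The key structural point to record here is that $A$ and $B$ are determined entirely at the two fixed events and are therefore \emph{independent of} $x$; only the central term $f_{up}(x)-f_{low}(x)$ varies as the pinch slides.

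Next I would minimize over $x$. Since $f_{up}$ and $f_{low}$ are affine on $[x_i,x_{i+1}]$, the function $f_{up}(x)-f_{low}(x)$ is affine in $x$ and hence attains its minimum over the closed interval at an endpoint, giving $\min_x\, (f_{up}(x)-f_{low}(x)) = \min(w^i,w^{i+1})$ with $w^j=f_{up}(x_j)-f_{low}(x_j)$. Because only the middle argument of the maximum depends on $x$, the minimum passes through the outer maximum, so the smallest HL-transition width equals $\max(A,\min(w^i,w^{i+1}),B)$, which is exactly the claimed $\max(\varepsilon_2^i,\min(w^i,w^{i+1}),w_1^{i+1})$ once $A$ and $B$ are identified with the stated interface terms. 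Comparing $w^i$ against $w^{i+1}$ merely records at which endpoint the affine minimum is achieved, and that endpoint is where the transition pinch (the ``vertical'') is placed. When the pinch is pushed to $x_i$ the slab carries a single trapezoid in the L-configuration with the all-three coverage occurring precisely at the left event, and symmetrically for $x_{i+1}$; in either case the change of tracked functions happens at an event. Applying this slab by slab to any optimal tube then relocates every transition onto an event without increasing the width, which yields the final assertion.

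I expect the main obstacle to be the justification that $A$ and $B$ are genuinely independent of the pinch location and that collapsing the pinch onto an event preserves continuity of the center curve $f^*$ while remaining compatible with the neighboring slabs. Concretely, one must check that the interface width imposed at $x_i$ is the one induced by the H-configuration (and at $x_{i+1}$ by the L-configuration) no matter where inside the slab the pinch sits, and that the degenerate two-trapezoid construction at an endpoint still realizes that width and covers $\{f_{up},f_{median}\}$ at $x_i$ and $\{f_{median},f_{low}\}$ at $x_{i+1}$ as required, maintaining $p=2$ throughout. A minor secondary point is the tie case $w^i=w^{i+1}$ (equivalently $f_{up}-f_{low}$ constant on the slab), where either endpoint serves and the stated formula is unaffected.
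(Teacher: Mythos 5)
Your proposal is correct and follows essentially the same route as the paper: both treat the pinch abscissa $x$ as the only free parameter in the two-trapezoid bound, observe that the middle term $f_{up}(x)-f_{low}(x)$ is affine on the slab and hence minimized at an endpoint (giving $\min(w^i,w^{i+1})$), and conclude that the transition can be pushed to whichever event achieves that minimum. Your write-up is more explicit than the paper's (in particular about the outer terms being independent of $x$, the min--max interchange, and the parallel/tie case), but the underlying argument is the same.
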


\begin{proof}
This can be seen by changing $x$ between $x_i$ and $x_{i+1}$.
The smallest value of the second term in Equation \ref{2trap} is $\min(w^i,w^{i+1})$.
Thus, if $f_{up}(x)$ and $f_{low}(x)$ are not parallel, we can decide the line for transition ($x=x_i$ or $x=x_{i+1}$ using the smallest value of $w^i$ and $w^{i+1}$). If they are parallel, one can choose line $x=x_i$ (or $x=x_{i+1}$) for the transition.
\end{proof}

Now, we are ready to show that we can solve the problem in linear time.

\begin{theorem} \label{n3p2}
The optimization problem for $n = 3$ and $p = 2$ can be solved in  $O(m)$ time.
\end{theorem}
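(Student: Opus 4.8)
The plan is to turn the structural results just proved—the two-trapezoids lemma and the Transition lemma (Lemma~\ref{l:trans})—into a single left-to-right sweep, eliminating the sorting and binary search responsible for the extra logarithmic factors in the $O(m\log m)$ bound. First I would recall that for $n=3,\,p=2$ the center of a feasible tube must, at every abscissa, lie in $R_1(\varepsilon)\cup R_2(\varepsilon)$; equivalently the tube is in one of two \emph{states}: state $H$, covering $f_{up}$ and $f_{median}$, or state $L$, covering $f_{median}$ and $f_{low}$ (covering all three being the overlap $R_1\cap R_2$). The three envelopes $f_{up},f_{median},f_{low}$ are linear between consecutive events $x_0=a,\dots,x_t=b$, of which there are only $O(m)$, so inside each slab $R_1(\varepsilon)$ and $R_2(\varepsilon)$ are trapezoids bounded by two linear functions. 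Being convex, any two of their points can be joined by an $x$-monotone segment that stays inside; hence within a slab a fixed state is maintainable exactly when its trapezoid is nonempty throughout, which by linearity is a condition at the two slab endpoints only.

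The core is then a bottleneck dynamic program over these two states. For each event $x_i$ I would maintain $D_H^i$ and $D_L^i$, the minimum \emph{width} $2\varepsilon$ of a feasible tube on $[x_0,x_i]$ ending in state $H$ (resp.\ $L$), where this width is the maximum vertical extent used along the prefix; monotonicity of feasibility in $\varepsilon$ (recall $R_j(\varepsilon)\subseteq R_j(\varepsilon')$ for $\varepsilon\le\varepsilon'$) is what makes such a bottleneck formulation correct. Initialize $D_H^0=f_{up}(a)-f_{median}(a)$ and $D_L^0=f_{median}(a)-f_{low}(a)$. To reach a state at $x_{i+1}$ there are exactly two options: stay in the same state across slab $i$, at the cost $h^i$ (for $H$) equal to the larger of the two endpoint gaps $f_{up}-f_{median}$, or transition from the other state inside the slab, whose minimum width is delivered in closed form by Lemma~\ref{l:trans}. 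Because that lemma confines an optimal tube's state changes to the events, these two options are exhaustive, so the recurrence
\[ D_H^{i+1}=\min\bigl(\max(D_H^i,\,h^i),\;\max(D_L^i,\,\varepsilon_2^i,\,\min(w^i,w^{i+1}),\,w_1^{i+1})\bigr) \]
and its symmetric counterpart for $D_L^{i+1}$ can each be evaluated in $O(1)$ from the endpoint values of the three functions in slab $i$. The optimum is read off at the end as $2\varepsilon^*=\min(D_H^t,D_L^t)$.

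For the running time, computing $f_{up},f_{median},f_{low}$ together with the $O(m)$ events is a linear-time merge: the vertices of each of the three input functions are already sorted in $x$, and each slab contributes only $O(1)$ pairwise intersection points. The sweep then does $O(1)$ work per slab, giving $O(m)$ overall. The speedup over $O(m\log m)$ is exactly the removal of the global sorting and binary search: the sweep emits $\varepsilon^*$ directly as a nested $\min/\max$ of local geometric quantities, each of which, by Lemma~\ref{32}, is already one of the admissible candidate values, so exactness is retained. The step I expect to require the most care is the global optimality of this two-state, event-synchronized program, i.e.\ that composing per-slab optimal widths by $\max$ along a path and $\min$ across the two incoming options truly yields the minimum width of a globally continuous tube. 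This rests on (i) the Transition lemma, which makes the transition set complete by pinning state changes to events, and (ii) the convexity of $R_1,R_2$ within a slab, which guarantees that any bottleneck-feasible pair of endpoint positions can actually be connected by a monotone center path, so that local feasibility composes into global feasibility; checking that no in-slab transition beats an event-aligned one, and that shared constraints at slab boundaries are not double-counted, is where the argument must be made precise.
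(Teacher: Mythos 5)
Your proposal is correct and follows essentially the same route as the paper: a left-to-right sweep over the $O(m)$ events maintaining two optimal prefix widths (one per state, covering $f_{up},f_{median}$ or $f_{median},f_{low}$), with the stay/transition recurrence justified by Lemma~\ref{l:trans} and evaluated in $O(1)$ per slab. Your added remarks on convexity of the per-slab trapezoids and on the bottleneck (min/max) structure make explicit what the paper leaves implicit, but the algorithm and recurrences coincide with Equations~(\ref{eq1}) and~(\ref{eq2}).
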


\begin{proof}


\begin{figure}
\centering
\includegraphics[height=7cm]{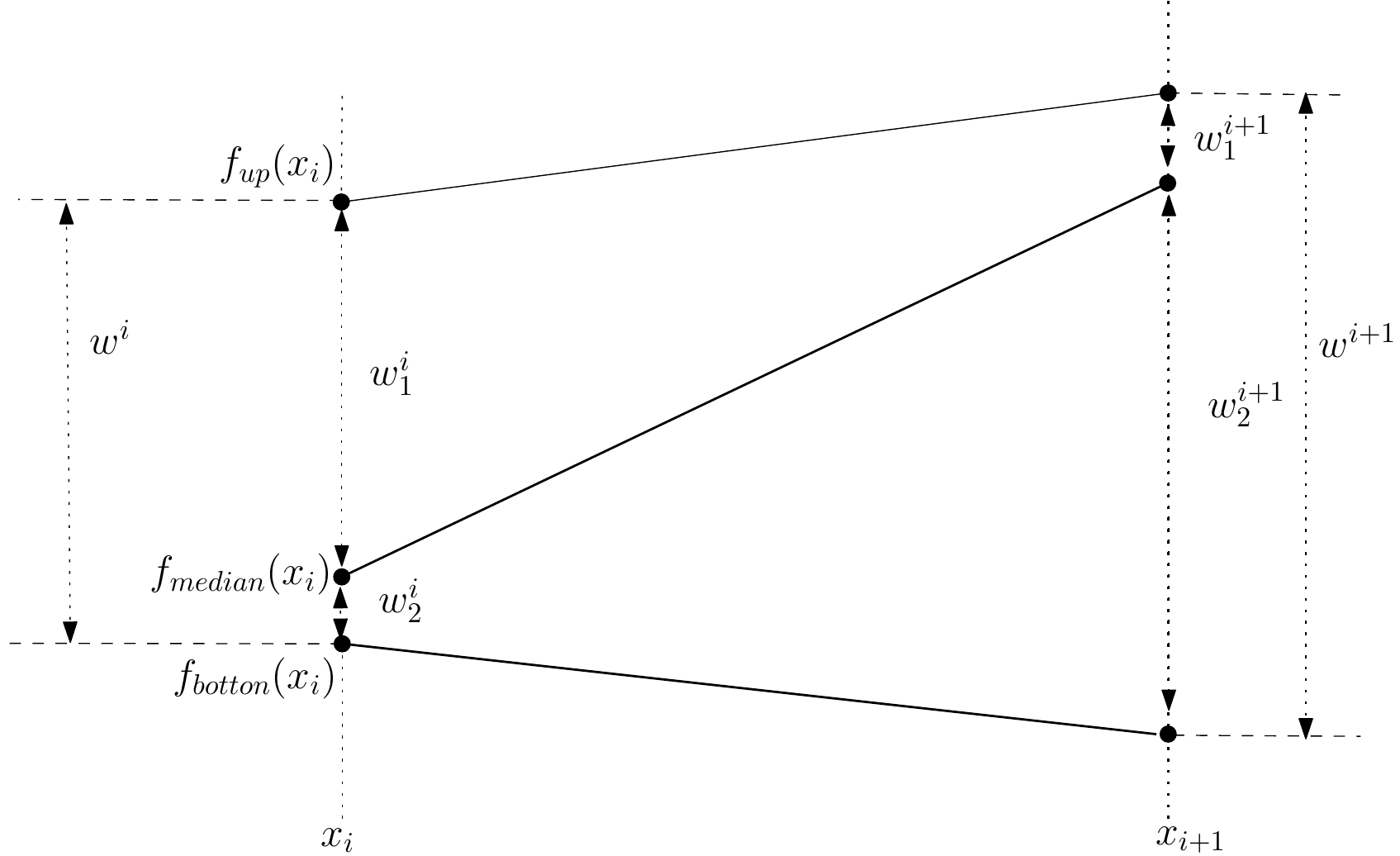}
\caption{An instance for updating the optimal tube in the slab between the events $x_i$ and $x_{i+1}$ for $n=3, \, p=2$.}
\label{fig:linear}
\end{figure}

Let $x_0=a,x_1,\dots,x_t=b$ be the events defined by vertices and intersection points of the functions.
For each event at $x=x_i$, we compute $\varepsilon^i_1$ and $\varepsilon^i_2$ for the spanning tubes in $[x_0,x_i]$ covering $[f_{up}(x_i),f_{median}(x_i)]$
and $[f_{median}(x_i),f_{low}(x_i)]$, respectively.
Let $w_1^i=f_{up}(x_i)-f_{median}(x_i)$ and $w_2^i=f_{median}(x_i)-f_{low}(x_i)$.
At the beginning, $\varepsilon_j^0=w_j^0    $ for $j=1,2$. 
Then $\varepsilon_1^{i+1}$ can be computed using two cases where the tube covers (i) $f_{up}(x_i)$ or (ii) $f_{low}(x_i)$. Then $\varepsilon_1^{i+1}$ is the minimum of the two. The second value is computed using Lemma \ref{l:trans} and $w^i=f_{up}(x_i)-f_{low}(x_i)$, see Figure \ref{fig:linear}.
Thus,
\begin{equation} 
\label{eq1}
\varepsilon_1^{i+1} = \min (\max (\varepsilon_1^{i},w_1^{i+1}),\max(\varepsilon_2^i,\min(w^i,w^{i+1}),w_1^{i+1})).
\end{equation}
Similarly,
\begin{equation} 
\label{eq2}
\varepsilon_2^{i+1} = \min (\max (\varepsilon_2^{i},w_2^{i+1}),\max(\varepsilon_1^i,\min(w^i,w^{i+1}),w_2^{i+1})).
\end{equation}
Finally, $\varepsilon^*=\min(\varepsilon_1^t,\varepsilon_2^t)$.

\medskip

{\em Running time}.
For each event $i+1$, the values of 
$w^{i+1}, w_1^{i+1}, w_1^{i+2}$ (see Figure \ref{fig:linear}) and the values of 
$\varepsilon_1^{i+1}, \varepsilon_2^{i+1}$ (by Equations (\ref{eq1}) and (\ref{eq2})) can be computed in constant time. Since we have $O(m)$ events, the overall running time is linear. 
\end{proof}

\begin{remark} Note that Theorem \ref{n3p2} improves the complexity obtained in Theorem \ref{general} for $n=3$. 
\end{remark}

\section{Case study: Quantifying melodic variation} \label{sect:case}
In this Section, we exemplify the use of the proposed method in a comparative performance analysis of the four \textit{fandango} styles investigated in this study. In particular, we use the proposed system to quantify the amount of melodic variation the skeleton is subjected to during performance, compare across styles and analyze the local variation on a phrase-level. 

From a music theoretic standpoint we know that among the four styles under study, the \textit{Fandangos de Cala{\~n}a} and the \textit{Fandangos de Valverde} are close to their folkloric origin and performers tend to largely preserve the melodic skeleton during performance. In the \textit{Fandangos Valientes de Huelva} and the \textit{Fandangos Valientes de Alosno} on the other hand, performers tend to use heavy melodic ornamentation as an artistic asset, resulting in a more distorted skeleton. 

The proposed method allows us to quantify the amount of melodic variation occurring in a set of performance transcriptions by fixing the parameter $\varepsilon$ and determining in a decision problem the maximum percentage of performances $\frac{p}{N}$ which can be enclosed by the tube. Here, we compute this value for each of the four styles under study, varying $\varepsilon$ between $1.0$ and $3.0$. 

\begin{table}
\caption{Fraction of performances enclosed by the tube for different styles.}
\label{tab:results2}
\centering
\begin{tabular}{l|c | c| c|}
Style & $\varepsilon=1.0$ & $\varepsilon = 2.0$ & $\varepsilon = 3.0$\\
\hline
Fandango de Cala{\~n}a &  0.2 &  0.5 & 0.8\\
Fandango de Valverde  &   0.1 & 0.3 & 0.5\\
Fandango Valiente de Alosno &  0.1 & 0.3 & 0.3\\
Fandango Valiente de Huelva &  0.0 & 0.1 & 0.1\\
\end{tabular}
\end{table}

The largest differences among styles can be observed for $\varepsilon = 3.0$. Note, that in this case, the tube covers $6$ semitones, which corresponds to half an octave. Consequently, melodic segments outside the tube correspond to relatively large deviations from the basic melodic contour. For this case, the results (Table \ref{tab:results2}) confirm the musicological considerations described above: For the \textit{Fandangos de Cala{\~n}a}, 80\% of the analyzed performance transcriptions can be enclosed by the tube with $\varepsilon = 3.0$, indicating that performers largely follow the underlying skeleton. The value for the \textit{Fandangos de Valverde} is slightly lower with $\frac{p}{N}=0.5$. The two \textit{valiente} styles show a significantly higher amount of melodic variation. For the \textit{Fandangos Valientes de Huelva} only 10\%, and for the \textit{Fandangos Valientes de Alosno} only 30\% of the transcriptions are enclosed in the tube. Figure \ref{fig:ex} shows the aligned transcriptions together with the computed templates using the values for $\frac{p}{N}$ from table \ref{tab:results2}. A similar trend can be observed for smaller values of $\varepsilon$.

\begin{figure}[th]
\centering
\includegraphics[width=.8\columnwidth]{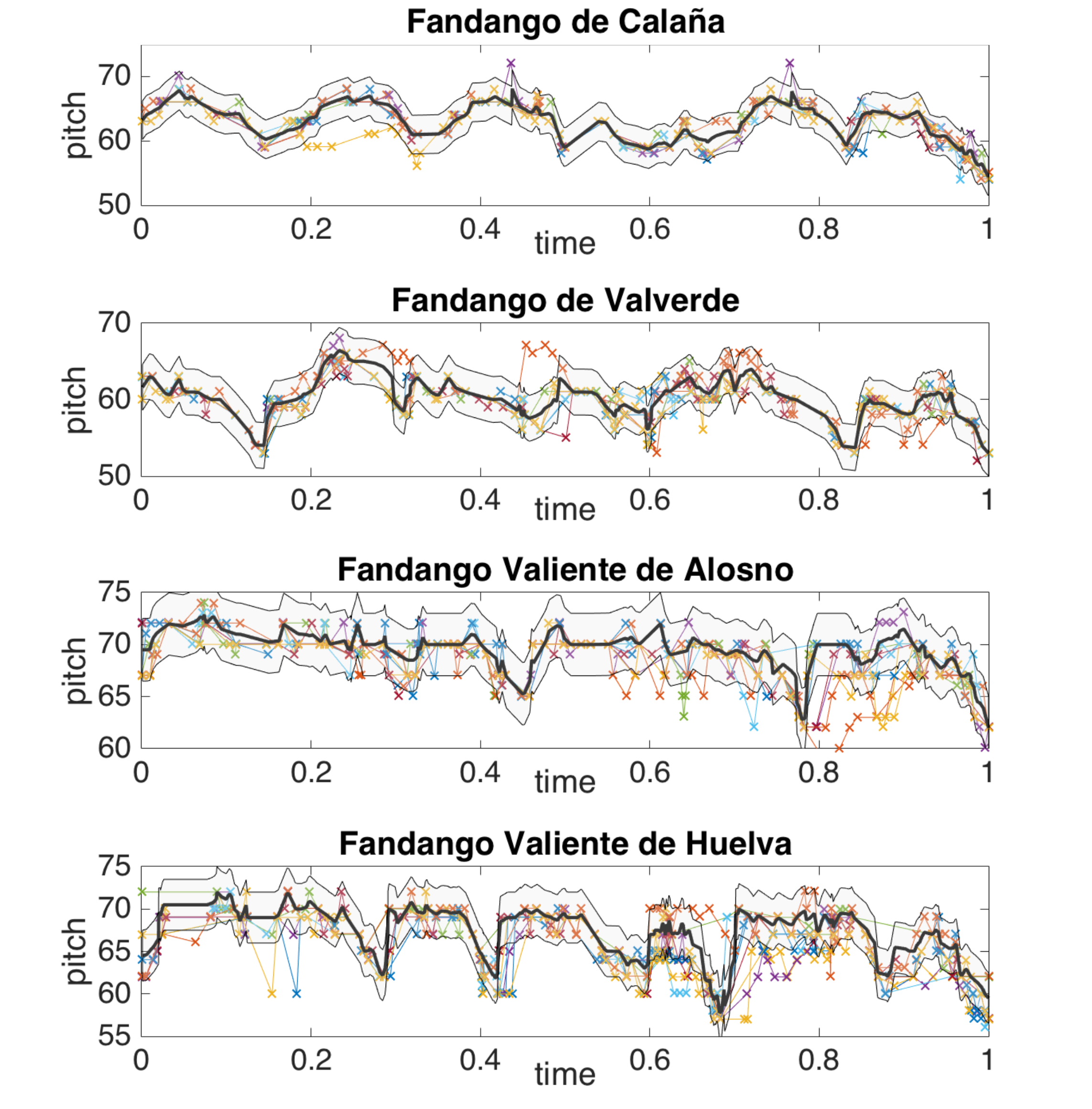}
\caption{Aligned transcriptions and computed skeleton for the four \textit{fandango} styles under study for $\varepsilon=3.0$.}\label{fig:ex}
\end{figure}

However, analysing the transcriptions in relation to the maximum spanning tube (Figure \ref{fig:ex}), reveals that there exist local differences in the amount of occurring variation. For example, for the \textit{Fandangos Valientes de Alosno}, all transcriptions are located inside the $3 \varepsilon$-tube from the beginning of the melody until approximately $0.3$ on the relative time axis. In order to obtain a finer granularity of the amount of occurring variation, we therefore repeat the previous experiment on a phrase level. More precisely, each of the recordings contains $6$ musical phrases, which we manually annotated. Fixing $\varepsilon=3.0$, we solve the optimization problem and compute $\frac{p}{N}$ for each phrase separately. 

The results in Table \ref{tab:results3} show, that with exception of the \textit{Fandango de Valverde}, the last phrase tends to exhibit a high amount of variation compared to other phrases. This phrase, which is referred to as \textit{ca{\'i}da} in flamenco jargon, represents the highlight of the flamenco performance, and consequently, it is likely that performers use a larger amount of ornamentation and variation as an expressive asset. For all styles, we furthermore observe a relatively high amount of variation for the fourth phrase. 

These observations are of interesting from a musicological viewpoint and give rise to several lines of study related to melodic variation and expressiveness in flamenco music.

\begin{table}
\caption{Fraction of performances enclosed by the tube per phrase for different styles and $\varepsilon=3.0$.}
\label{tab:results3}
\centering
\begin{tabular}{l|c | c | c | c | c | c |}
Style & phrase 1 & phrase 2 & phrase 3 & phrase 4 & phrase 5 & phrase 6\\
\hline
Fandango de Cala{\~n}a &  0.5 &  0.7 & 0.7 & 0.5 &0.7  & 0.4\\
Fandango de Valverde  &   0.5 & 0.5 & 0.2 & 0.2& 0.3 &0.5\\
Fandango Valiente de Alosno &  0.3 & 0.4 & 0.2 &0.1 &  0.5&0.1\\
Fandango Valiente de Huelva &  0.3 & 0.7 & 0.5 & 0.2 & 0.3&0.1\\
\end{tabular}
\end{table}

\section{Conclusions and Future Research} \label{sect:conclusion}

Motivated by musical properties typically encountered in the analysis of oral traditions, we introduced in this paper a new geometric optimization problem, the Spanning Tube Problem (STP). 
We model $n$ melodies as polygonal curves in the time-pitch space with at most $m$ vertices. The aim is to compute a new polygonal curve, the template, which fits a fixed number $p$ of similar items. The particular challenge here is, that the parameter $p$ corresponds to an amount of melodies, but does not refer to a specific set of melodies. We solve the optimization problem by performing binary search on a discrete set of candidate values, for which we solve the corresponding decision problem.
The obtained time complexity is $O(n^2m\log n\log nm)$. We also prove that the particular case $n=3$, $p=2$ can be solved in linear time with a more elaborate approach.
Finally, we perform an experimental study with flamenco melodies demonstrating how the resulting STP can be employed in a comparative performance analysis to quantify the amount of variation among a set of melodies.

There are several immediate suggestions for further research. 
\begin{itemize}
\item The first issue is related to the complexity. Can the asymptotic complexity of the problem be improved? For large data sets, the time complexity of our approach is roughly cubic and we ask if a more detailed study allows us to improve the algorithm, as we did for the case $n=3$ and $p=n-1=2$. Since $p$ could be close to $n$, it is even interesting to find a $O^*(n(n-p)m)$-time algorithm.

\item In the STP, the width of the tube is the same at any point in time. However, we observed in the transcriptions that the melodic variability is not constant. Consequently, a future research could target a STP with variable width. Observe that the simple idea of intersecting the optimal tube with the polygon between the upper and lower hulls of the functions gives us a more accurate visualization of the variability.

\item Another possible variant is to restrict the number of vertices of the template. In fact, for highly ornamented melodies, the template appears to be very complex and a simpler prototype could better model the underlying melodic movement.

\item Another interesting question is to efficiently solve the reverse problem: Given an $\varepsilon>0$, compute the maximum number $p^*$ of melodies that can be captured by an $\varepsilon$-tube. Note, that our algorithm solves this problem in $O(n^2m\log n \log p)$ using binary search. Is it possible to improve the running time using a different approach?

\item Finally, the extension of our problem to 3D leads to an interesting task: the recognition of user-defined temporal gestures from tactile interfaces.
Given a set of $n$ curves (gestures) $f_i$ in the $XY$ plane, we want to compute a new continuous curve approximating the set of curves $f_i$. 
Note that the gestures $f_i$ can be represented by the orthogonal projection of three dimensional curves in the $XYT$ space that are monotone on time. Consequently, this problem can be seen as a generalization of the Spanning Tube Problem, see Figure~\ref{fig:signature3D}.
The definition in 2D can be extended to 3D as follows:

{\bf The 3D STP}:  \emph{Let $a, b, c ,d, t \in  \mathbb{R} ,$ with $a<b, \, c<d$ and $t>0$;  let $n,m,p \in \mathbb{Z}^+;$ and   for $i=1, \cdots, n,$ let $f_i: \left[0, t\right] \rightarrow \mathbb{R}^2$ be a T-monotone piecewise linear function with at most $m$ links. Given $p\leq n,$ find minimum $\varepsilon^*>0$ such that there exists a T-monotone continuous function $f^*(t)=(f_1(t), f_2(t))$ fulfilling that, for each $t \in \left[0, T\right]$  the ball (for a $L_p$ distance ($L_2, L_1, L_{\infty}, \cdots$)) of radius $\varepsilon^*$ centered at $(t, f^*(t))$ intersects at least $p$ functions.}

\begin{figure}[ht]
\centering
\includegraphics[width=0.5\columnwidth]{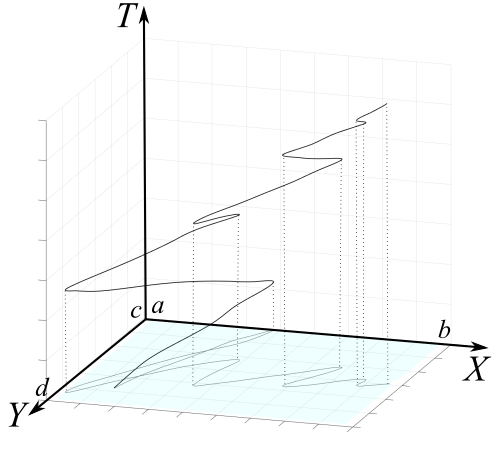}
\caption{The orthogonal projection of $f$ onto the XY plane is the gesture.  $f$ is a T-monotone curve.}\label{fig:signature3D}
\end{figure}

\end{itemize}

 \bibliographystyle{abbrv}
\bibliography{papers}

\end{document}